\begin{document}

\title{Classic Lagrangian may not be applicable to the traveling salesman problem
}


\author{Michael X. Zhou
}


\institute{Michael X. Zhou \at
              School of Information Science and Engineering, Central South University, Changsha 410083, China.\\
}

\date{Received: date / Accepted: date}

\maketitle

\begin{abstract}
In this short note, the dual problem for the traveling salesman problem is constructed through the classic Lagrangian.
The existence of optimality conditions is expressed as a corresponding inverse problem. 
A general 4-cities instance is given, and the numerical experiment shows that the classic Lagrangian may not be applicable to the traveling salesman problem.
\keywords{Traveling salesman problem \and Classic Lagrangian \and Inverse problem}
\end{abstract}

\section{Problem description}
Let $\mathcal{N} = \{1,2,\cdots,n\}$ be a set of cities, and the distance between city \textit{i} and \textit{j} is given by $d_{ij}$ (for symmetrical TSP, $d_{ij} = d_{ji}, d_{ii} = 0, \forall i,j \in \mathcal{N}$), then the TSP can be represented by the quadratic programming problem (primal problem) \cite{Hopfield}
\begin{eqnarray}
(\mathcal{P})~~ \min && f(\bm X) = \frac{1}{2}\sum_{i=1}^{n}\sum_{k=1}^{n}\sum_{j=1}^{n} x_{ij}d_{ik}(x_{k(j-1)}+x_{k(j+1)}) \nonumber \\
\mathrm{subject~to} && \sum_{j=1}^{n} x_{ij} = 1 \nonumber \\
&& \sum_{i=1}^{n} x_{ij} = 1 \nonumber \\
&& x_{ij} \in \{0,1\}
\end{eqnarray}
where $\bm X = [x_{11}, x_{21},\cdots, x_{n1}, x_{12}, x_{22}, \cdots, x_{n2},\cdots, x_{1n}, x_{2n}, \cdots, x_{nn}]^T \in \mathbb{R}^{n^2}$, and $x_{ij}$ is defined by
\begin{equation}
x_{ij} =
\left\{ \begin{aligned}
1 &~~\mathrm{if~city}~i~\mathrm{is~in~the}~j\mathrm{th~position} \\
0 &~~\mathrm{otherwise}
\end{aligned} \right.
\end{equation}
\indent Furthermore, due to the round trip of TSP, we have
 \begin{equation}
x_{i0} = x_{in}, x_{i1} = x_{i(n+1)}, \forall i,j \in \mathcal{N}
\end{equation}

\begin{theorem}
The TSP can be rewritten to the following vector type
\begin{eqnarray}\label{eq4}
(\mathcal{P}) ~~\min && f(\bm X) = \frac{1}{2}\bm X^T A \bm X \nonumber \\
\mathrm{s.t.} && C \bm X = \bm e \nonumber \\
&& D \bm X = \bm e \nonumber \\
&& \bm X \circ \bm X - \bm X = \bm 0
\end{eqnarray}
\end{theorem}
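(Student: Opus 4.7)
The plan is to verify three separate correspondences: that the triple-sum objective can be recast as a symmetric quadratic form $\frac{1}{2}\bm X^T A \bm X$; that the row-sum and column-sum constraints reduce to $C\bm X = \bm e$ and $D\bm X = \bm e$ for two selector matrices $C$ and $D$; and that the binary restriction $x_{ij}\in\{0,1\}$ is captured by $\bm X\circ \bm X - \bm X = \bm 0$. The last item is immediate, since componentwise $x^2 = x$ if and only if $x\in\{0,1\}$, so it requires no further argument.

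For the linear constraints, working with the column-major ordering $\bm X_{(j-1)n+i} = x_{ij}$ already fixed in the definition of $\bm X$, I define $C\in\{0,1\}^{n\times n^2}$ by $C_{i,(j-1)n+i'} = \delta_{ii'}$ (for every $j$) and $D\in\{0,1\}^{n\times n^2}$ by $D_{j,(j'-1)n+i} = \delta_{jj'}$ (for every $i$). A direct check gives $(C\bm X)_i = \sum_{j} x_{ij}$ and $(D\bm X)_j = \sum_{i} x_{ij}$, so the original linear equality systems become exactly $C\bm X = \bm e$ and $D\bm X = \bm e$.

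For the objective, I would index rows and columns of $A\in\mathbb{R}^{n^2\times n^2}$ by pairs $(i,j)$ and set $A_{(i,j),(k,l)} = d_{ik}$ when $l\equiv j\pm 1 \pmod n$ and zero otherwise; the modular convention comes from the cyclic identification $x_{i0}=x_{in}$, $x_{i(n+1)}=x_{i1}$. Expanding $\frac{1}{2}\bm X^T A \bm X$ and grouping the $l=j+1$ and $l=j-1$ contributions reproduces $\frac{1}{2}\sum_{i,j,k} d_{ik} x_{ij}(x_{k(j-1)}+x_{k(j+1)})$ term by term. I expect no conceptual obstacle, since each piece is a mechanical translation of scalar notation into vector/matrix form; the chief item to verify is that $A$ is genuinely symmetric, which follows from the hypothesis $d_{ik}=d_{ki}$ together with the symmetry of the relation $l\equiv j\pm 1$ under swapping $(i,j)\leftrightarrow(k,l)$, and from the observation that for $n\ge 2$ no entry $A_{(i,j),(i,j)}$ is nonzero so no spurious $x_{ij}^2$ terms appear in the expansion.
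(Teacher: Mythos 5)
Your proposal is correct and follows essentially the same route as the paper: a mechanical translation of the binary, row-sum, and column-sum constraints plus an entrywise construction of $A$, and your rule $A_{(i,j),(k,l)}=d_{ik}$ for $l\equiv j\pm 1 \pmod n$ produces exactly the matrix the paper builds as a sum over $k$ of diagonal-times-selector blocks. The only differences are cosmetic --- your $C$ and $D$ are swapped relative to the paper's display in (\ref{eq5}), which is immaterial since both constraints equal $\bm e$, and your explicit check that $A$ is symmetric with zero diagonal is a small bonus the paper omits.
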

where, $\bm e$ is an appropriate vector with entries one, $\bm s \circ \bm t = (s_1t_1, \cdots, s_nt_n)^T$ ($\bm s, \bm t \in \mathbb{R}^n$) is the Hadamard product, and
\begin{eqnarray}\label{eq5}
A =&& \sum_{k=1}^{n} \Bigg\{
\mathrm{diag}\Big\{d_{1k}, \cdots, d_{nk}, \cdots, d_{1k}, \cdots, d_{nk}\Big\}
\Big(\bm e^T_{(n-1)n+k}+\bm e^T_{n+k}; \cdots; \nonumber  \\
&& \bm e^T_{(n-1)n+k}+\bm e^T_{n+k}; \bm e^T_{k}+\bm e^T_{2n+k}; \cdots; \bm e^T_{k}+\bm e^T_{2n+k}; \cdots; \bm e^T_{(n-2)n+k}+\bm e^T_{k};  \nonumber  \\
&& \cdots; \bm e^T_{(n-2)n+k}+\bm e^T_{k}\Big)
\Bigg\}, \nonumber \\
C = && \begin{pmatrix}
1 & \cdots & 1 & 0 & \cdots & 0 & 0 & \cdots & 0 \\
0 & \cdots & 0 & 1 & \cdots & 1 & 0 & \cdots & 0 \\
\vdots & \ddots & \vdots & \vdots & \ddots & \vdots & \vdots & \ddots & \vdots \\
0 & \cdots & 0 & 0 & \cdots & 0 & 1 & \cdots & 1 \\
\end{pmatrix},
\;
D = \begin{pmatrix}
1 & 0 & \cdots & 0 & \cdots & \cdots & 1 & 0 & \cdots & 0 \\
0 & 1 & \cdots & 0 & \cdots & \cdots & 0 & 1 & \cdots & 0 \\
\vdots & \vdots & \ddots & \vdots &  \vdots & \vdots & \vdots & \vdots & \ddots & \vdots \\
0 & 0 & \cdots & 1 & \cdots & \cdots & 0 & 1 & \cdots & 1 \\
\end{pmatrix}
\end{eqnarray}
here, $A \in \mathbb{R}^{n^2 \times n^2}$, $C,D \in \mathbb{R}^{n \times n^2}$ and $\bm {e}_i \in \mathbb{R}^{n^2}$ is an unit vector with nonzero at position $i$.

\begin{proof}
Let define the following notation
\begin{eqnarray}
U \oplus V \oplus W = \sum_{i,j} u_{ij} v_{ij} w_{ij}, U = \{u_{ij}\}^{n \times n}, V = \{v_{ij}\}^{n \times n}, W = \{w_{ij}\}^{n \times n}. \;\;\;
\end{eqnarray}
In the objective function, the first part can be rewritten as
\begin{eqnarray}
\sum_{i,j}x_{ij}d_{ik}x_{k(j-1)}
& = & \begin{pmatrix}
x_{11} & \cdots & x_{1n} \\
\vdots & \ddots & \vdots \\
x_{n1} & \cdots & x_{nn}
\end{pmatrix}
\oplus
\begin{pmatrix}
d_{1k} & \cdots & d_{1k} \\
\vdots & \ddots & \vdots \\
d_{nk} & \cdots & d_{nk}
\end{pmatrix}
\oplus
\begin{pmatrix}
x_{kn} & \cdots & x_{k(n-1)} \\
\vdots & \ddots & \vdots \\
x_{kn} & \cdots & x_{k(n-1)}
\end{pmatrix} \nonumber \\
& = &
\begin{pmatrix}
x_{11}\\
\vdots \\
x_{n1} \\
\vdots \\
x_{1n}\\
\vdots \\
x_{nn} \\
\end{pmatrix}^T
\begin{pmatrix}
d_{1k} &  & & & & &\\
       & \ddots & & & & &\\
       &   & d_{nk} & & &\\
       &   &  & \ddots & & &\\
       &  & & & d_{1k} & &\\
       & & & & & \ddots &\\
       &   &  & & &  & d_{nk}
\end{pmatrix}
\begin{pmatrix}
x_{kn}\\
\vdots \\
x_{kn} \\
\vdots \\
x_{k(n-1)}\\
\vdots \\
x_{k(n-1)} \\
\end{pmatrix} \nonumber \\
& = &
\bm X^T
\mathrm{diag}\Big\{d_{1k}, \cdots, d_{nk}, \cdots, d_{1k}, \cdots, d_{nk}\Big\}
\Big(\bm e^T_{(n-1)n+k}; \cdots; \bm e^T_{(n-1)n+k}; \nonumber  \\
&&\bm e^T_{k}; \cdots; \bm e^T_{k}; \cdots; \bm e^T_{(n-2)n+k}; \cdots; \bm e^T_{(n-2)n+k}\Big)
\bm X.
\end{eqnarray}
In a similar way,
\begin{eqnarray*}
\sum_{i,j}x_{ij}d_{ik}x_{k(j+1)} & = & \bm X^T
\mathrm{diag}\Big\{d_{1k}, \cdots, d_{nk}, \cdots, d_{1k}, \cdots, d_{nk}\Big\}\nonumber  \\
& &\Big(\bm e^T_{n+k}; \cdots; \bm e^T_{n+k};
\bm e^T_{2n+k}; \cdots; \bm e^T_{2n+k}; \cdots; \bm e^T_{k}; \cdots; \bm e^T_{k}\Big)
\bm X.
\end{eqnarray*}
It is easy to rewrite the constraints in vector forms as given in (\ref{eq4}) and (\ref{eq5}).
This completes the proof. \qed
\end{proof}

Without loss of generality, suppose that $x_{11} = 1$, then we have
\[
x_{1,j} = 0,  x_{i,1} = 0, \;\; \forall \;i,j = 2, \cdots, n.
\]

Let define $\bm Y = [x_{22}, x_{32}, \cdots, x_{n2}, \cdots, x_{2n}, x_{3n}, \cdots, x_{nn}]^T \in \mathbb{R}^{(n-1)^2 \times (n-1)^2}$ and rearrange $\bm X$ in such a way that
\begin{eqnarray*}
\hat{\bm X} &=& (\bm X_1; \bm Y) \\
\hat{\bm X} &=& \bm X(\bm {id})
\end{eqnarray*}
where, $\bm X_1 = [x_{11}, x_{21},\cdots, x_{n1}, x_{12}, x_{13}, \cdots, x_{1n}]^T$, and $\bm {id}$ is a unique index vector to establish the relationship between $\bm X$
and $\hat{\bm X}$.

To make sure that the objective function value is constant, the matrix $A$ should be rearranged through the following procedures
\begin{eqnarray*}
\hat{A} &=& A, \\
\hat{A}(n\!+\!2:n^2\!-\!n\!+\!1,:) &=& A(\bm {id},:), \\
\hat{A}(:,n\!+\!2:n^2\!-\!n\!+\!1) &=& A(:,\bm {id}).
\end{eqnarray*}

Let the rearranged matrix $\hat{A}$ be partitioned into
\begin{eqnarray*}
\hat{A} =
\begin{pmatrix}
\hat{A}_{11} & \hat{A}_{12} \\
\hat{A}_{21} & \hat{A}_{22}
\end{pmatrix}
\end{eqnarray*}
where, $\hat{A}_{11} \in \mathbb{R}^{(2n\!-\!1) \times (2n\!-\!1)}$, $\hat{A}_{12} \in \mathbb{R}^{(2n\!-\!1) \times (n\!-\!1)^2}$, $\hat{A}_{21} \in \mathbb{R}^{(n\!-\!1)^2 \times (2n\!-\!1)}$
and $\hat{A}_{22} \in \mathbb{R}^{(n\!-\!1)^2 \times (n\!-\!1)^2}$, then we have
\begin{eqnarray*}
\frac{1}{2} \hat{\bm X}^T \hat{A} \hat{\bm X} &=& \frac{1}{2} (\bm X_1; \bm Y)^T
\begin{pmatrix}
\hat{A}_{11} & \hat{A}_{12} \\
\hat{A}_{21} & \hat{A}_{22}
\end{pmatrix}
(\bm X_1; \bm Y) \\
&=& \frac{1}{2} \bm Y^T \hat{A}_{22} \bm Y + \frac{1}{2}(\bm Y^T \hat{A}_{21} \bm X_1 + \bm X_1^T \hat{A}_{12} \bm Y) + \frac{1}{2} \bm X^T \hat{A}_{11} \bm X.
\end{eqnarray*}
As a result, the TSP problem can be reduced to
\begin{eqnarray}
(\mathcal{P}_r) ~~\min && f(\bm Y) = \frac{1}{2} \bm Y^T A_r \bm Y - \bm b_r^T \bm Y\nonumber \\
\mathrm{s.t.} && E_r \bm Y = \bm e \nonumber \\
&& \bm Y \circ \bm Y - \bm Y = \bm 0
\end{eqnarray}
where, $A_r = \hat{A}_{22}$, $\bm b_r = -\frac{1}{2}(\hat{A}_{21}\bm X_1 + \hat{A}_{12}^T\bm X_1)$ and $E_r = (C_r;D_r)$, here,
$C_r \in \mathbb{R}^{(n\!-\!1) \times (n\!-\!1)^2}$,  $D_r \in \mathbb{R}^{(n\!-\!2) \times (n\!-\!1)^2}$ (the last row is deleted) and
\begin{eqnarray*}
C_r = && \begin{pmatrix}
1 & \cdots & 1 & 0 & \cdots & 0 & 0 & \cdots & 0 \\
0 & \cdots & 0 & 1 & \cdots & 1 & 0 & \cdots & 0 \\
\vdots & \ddots & \vdots & \vdots & \ddots & \vdots & \vdots & \ddots & \vdots \\
0 & \cdots & 0 & 0 & \cdots & 0 & 1 & \cdots & 1 \\
\end{pmatrix},
\;
D_r = \begin{pmatrix}
1 & 0 & \cdots & 0 & \cdots & \cdots & 1 & 0 & \cdots & 0 \\
0 & 1 & \cdots & 0 & \cdots & \cdots & 0 & 1 & \cdots & 0 \\
\vdots & \vdots & \ddots & \vdots &  \vdots & \vdots & \vdots & \vdots & \ddots & \vdots \\
\xout{0} & \xout{0} & \xout{\cdots} & \xout{1} & \xout{\cdots} & \xout{\cdots} & \xout{0} & \xout{1} & \xout{\cdots} & \xout{1} \\
\end{pmatrix}
\end{eqnarray*}
\section{Classic Lagrangian method}
By introducing the Lagrange multiplier $\bm \lambda$, $\bm \mu$ associated with constraint $E_r \bm Y - \bm e = 0$, and $\bm Y \circ \bm Y - \bm Y = \bm 0$, respectively, the Lagrangian $L:\mathbb{R}^{(n\!-\!1)^2} \times \mathbb{R}^{2n\!-\!3} \times \mathbb{R}^{(n\!-\!1)^2}\rightarrow \mathbb{R}$ can be defined as
\begin{eqnarray}
L(\bm Y, \bm \lambda,\bm \mu) &=& \frac{1}{2} \bm Y^T A_r \bm Y - \bm b_r^T \bm Y + \bm \lambda^T (E_r \bm Y - \bm e) + \frac{1}{2} \bm \mu^T (\bm Y \circ \bm Y - \bm Y)\nonumber \\
&=&  \frac{1}{2} \bm Y^T A_r(\bm \lambda,\bm \mu) \bm Y - \bm Y^T \bm b_r(\bm \lambda,\bm \mu) - \bm \lambda^T  \bm e
\end{eqnarray}
where,
\[
A_r(\bm \lambda,\bm \mu) = A_r + \mathrm{diag}\{\bm \mu\}, \bm b_r(\bm \lambda,\bm \mu) = \bm b_r + \frac{1}{2}\bm \mu - E_r^T \bm \lambda
\]

The Lagrangian dual function can be obtained by
\begin{eqnarray}
g(\bm \lambda, \bm \mu) = \inf_{\bm Y} L(\bm Y, \bm \lambda, \bm \mu).
\end{eqnarray}

Let define the following dual feasible space
\begin{eqnarray}
\mathcal{S}^{+} = \{(\bm \lambda, \bm \mu)\in \mathbb{R}^{2n\!-\!3} \times \mathbb{R}^{(n\!-\!1)^2} | A_r(\bm \lambda,\bm \mu) \succ 0\},
\end{eqnarray}
then the Lagrangian dual function can be written explicitly as
\begin{eqnarray}
g(\bm \lambda,\bm \mu) = -\frac{1}{2} \bm b_r^T(\bm \lambda,\bm \mu) A_r^{-1}(\bm \lambda,\bm \mu) \bm b_r(\bm \lambda,\bm \mu) - \bm \lambda^T  \bm e, \;\; \mathrm{s.t.}\;\; (\bm \lambda,\bm \mu) \in  \mathcal{S}^{+}
\end{eqnarray}
associated with the Lagrangian equation
\begin{eqnarray}
A_r(\bm \lambda,\bm \mu) \bm Y = \bm b_r(\bm \lambda,\bm \mu).
\end{eqnarray}
Finally, the Lagrangian dual problem can be obtained as
\begin{eqnarray}
\max_{(\bm \lambda, \bm \mu) \in  \mathcal{S}^{+}} \Big\{g(\bm \lambda,\bm \mu) = -\frac{1}{2} \bm b_r^T(\bm \lambda,\bm \mu) A_r^{-1}(\bm \lambda,\bm \mu) \bm b_r(\bm \lambda,\bm \mu) - \bm \lambda^T  \bm e\Big\}.
\end{eqnarray}

\begin{theorem} \label{the2}
If $(\bar{\bm \lambda}, \bar{\bm \mu})$ is a critical point of the Lagrangian dual function and $(\bar{\bm \lambda}, \bar{\bm \mu}) \in \mathcal{S}^{+}$, then the corresponding
$\bar{\bm Y} = A_r^{-1}(\bar{\bm \lambda},\bar{\bm \mu}) \bm b_r(\bar{\bm \lambda},\bar{\bm \mu}))$ is a global solution to the reduced TSP problem ($\mathcal{P}_r$).
\end{theorem}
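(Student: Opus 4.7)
The plan is to leverage the strict convexity of $L(\cdot,\bar{\bm\lambda},\bar{\bm\mu})$ on $\mathcal{S}^+$ together with the critical-point conditions to establish primal feasibility, and then close the duality gap by a direct comparison argument.

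First I would observe that because $(\bar{\bm\lambda},\bar{\bm\mu})\in\mathcal{S}^+$, the matrix $A_r(\bar{\bm\lambda},\bar{\bm\mu})$ is positive definite, so the Lagrangian $L(\bm Y,\bar{\bm\lambda},\bar{\bm\mu})$ is a strictly convex quadratic in $\bm Y$ and attains its unique unconstrained minimum at $\bar{\bm Y}=A_r^{-1}(\bar{\bm\lambda},\bar{\bm\mu})\bm b_r(\bar{\bm\lambda},\bar{\bm\mu})$. In particular, $g(\bar{\bm\lambda},\bar{\bm\mu})=L(\bar{\bm Y},\bar{\bm\lambda},\bar{\bm\mu})$.

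Next I would compute the gradients of the dual function at the critical point. By the envelope theorem applied to $g(\bm\lambda,\bm\mu)=\min_{\bm Y}L(\bm Y,\bm\lambda,\bm\mu)$, differentiability of $\bm Y^{*}(\bm\lambda,\bm\mu)=A_r^{-1}(\bm\lambda,\bm\mu)\bm b_r(\bm\lambda,\bm\mu)$ on $\mathcal{S}^+$ gives
\[
\nabla_{\bm\lambda}g(\bar{\bm\lambda},\bar{\bm\mu})=E_r\bar{\bm Y}-\bm e,\qquad \nabla_{\bm\mu}g(\bar{\bm\lambda},\bar{\bm\mu})=\tfrac{1}{2}\bigl(\bar{\bm Y}\circ\bar{\bm Y}-\bar{\bm Y}\bigr).
\]
Since $(\bar{\bm\lambda},\bar{\bm\mu})$ is a critical point, both gradients vanish, so $\bar{\bm Y}$ satisfies $E_r\bar{\bm Y}=\bm e$ and $\bar{\bm Y}\circ\bar{\bm Y}-\bar{\bm Y}=\bm 0$; that is, $\bar{\bm Y}$ is primal feasible for $(\mathcal{P}_r)$.

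To conclude, I would invoke weak duality. For any primal-feasible $\bm Y$ the two constraint terms of $L(\bm Y,\bar{\bm\lambda},\bar{\bm\mu})$ drop out, giving $L(\bm Y,\bar{\bm\lambda},\bar{\bm\mu})=f(\bm Y)$. Because $\bar{\bm Y}$ is the unconstrained global minimizer of $L(\cdot,\bar{\bm\lambda},\bar{\bm\mu})$, we obtain
\[
f(\bar{\bm Y})=L(\bar{\bm Y},\bar{\bm\lambda},\bar{\bm\mu})\le L(\bm Y,\bar{\bm\lambda},\bar{\bm\mu})=f(\bm Y)
\]
for every primal-feasible $\bm Y$, which is the desired global optimality of $\bar{\bm Y}$.

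The main obstacle I anticipate is the application of the envelope theorem: strictly speaking one must first verify that $\bm Y^{*}(\bm\lambda,\bm\mu)$ is $C^1$ on a neighborhood of $(\bar{\bm\lambda},\bar{\bm\mu})$ in $\mathcal{S}^+$, which follows from the continuity of matrix inversion on the open set of positive definite matrices, and then combine the chain rule with the first-order optimality $\nabla_{\bm Y}L(\bm Y^{*},\bm\lambda,\bm\mu)=0$ to cancel the terms proportional to $\partial\bm Y^{*}/\partial(\bm\lambda,\bm\mu)$. Apart from this bookkeeping, the remaining steps are standard Lagrangian duality. Notably, the theorem gives only a sufficient condition; whether such a critical point with $A_r(\bar{\bm\lambda},\bar{\bm\mu})\succ 0$ actually exists for the TSP is exactly the issue the rest of the paper is designed to probe.
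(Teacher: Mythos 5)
Your proof is correct: the envelope-theorem computation showing that criticality of $g$ at $(\bar{\bm \lambda},\bar{\bm \mu})$ forces $E_r\bar{\bm Y}=\bm e$ and $\bar{\bm Y}\circ\bar{\bm Y}=\bar{\bm Y}$, followed by the weak-duality comparison $f(\bar{\bm Y})=L(\bar{\bm Y},\bar{\bm \lambda},\bar{\bm \mu})\le L(\bm Y,\bar{\bm \lambda},\bar{\bm \mu})=f(\bm Y)$ for all feasible $\bm Y$, is exactly the standard argument this theorem rests on. The paper itself dismisses the proof as ``trivial'' and omits it entirely, so your write-up in fact supplies the missing details; your closing remark that the real issue is whether such a critical point in $\mathcal{S}^{+}$ exists is precisely the point the paper's numerical section goes on to probe.
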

\begin{proof}
The proof is trivial and is omitted here.
\end{proof}
\section{Inverse problem}
The inverse problem can be simplified as follows
\begin{eqnarray} \label{inverseprob}
&\mathrm{find}& \;\;\;\; A_r, \bm b_r, \bm Y, \bm \lambda, \bm \mu \nonumber \\
&\mathrm{s.t.}& \;\;\;(A_r + \mathrm{diag}\{\bm \mu\}) \bm Y = \bm b_r + \frac{1}{2}\bm \mu - E_r^T \bm \lambda  \nonumber \\
&& \;\;\;A_r + \mathrm{diag}\{\bm \mu\} \succ 0 \nonumber \\
&& \;\;\; E_r \bm Y = \bm e \nonumber \\
&& \;\;\; \bm Y \circ \bm Y = \bm Y
\end{eqnarray}

This a feasibility problem in optimization. To solve such an inverse problem, some degree of freedom should be given in advance. For instance,
we can suppose that $\bm Y$ is a freely random "true" solution. 

\section{Numerical experiments}
Now, let consider a 4-cities TSP problem, whose distance matrix is given as follows
\begin{eqnarray*}
\bm d =
\begin{pmatrix}
         0   & d_{12}  &  d_{13}  &  d_{14}\\
    d_{21}   &      0  &  d_{23}  &  d_{24}\\
    d_{31}   & d_{32}  &       0  &  d_{34}\\
    d_{41}   & d_{42}  &  d_{43}  &       0\\
\end{pmatrix} =
\begin{pmatrix}
0       & \bm d_1^T \\
\bm d_1 & \bm d_2
\end{pmatrix}
\end{eqnarray*}
where, $\bm d_1 = (d_{12},d_{13}, d_{14})^T$ and $\bm d_2$ is the remainder,
then
\begin{eqnarray*}
A_r =
\begin{pmatrix}
\bm 0   & \bm d_2 & \bm 0\\
\bm d_2 & \bm 0   & \bm d_2\\
\bm 0   & \bm d_2 & \bm 0\\
\end{pmatrix},
\bm b_r =
\begin{pmatrix}
- \bm d_1\\
\bm 0 \\
- \bm d_1\\
\end{pmatrix},
E_r =
\begin{pmatrix}
     1  &   1  &   1  &   0  &   0  &   0  &   0  &   0  &   0\\
     0  &   0  &   0  &   1  &   1  &   1  &   0  &   0  &  0\\
     0  &   0  &   0  &   0  &   0  &   0  &   1  &   1  &   1\\
     1  &   0  &   0  &   1  &   0  &   0  &   1  &   0  &   0\\
     0  &   1  &   0  &   0  &   1  &   0  &   0  &   1  &   0\\
\end{pmatrix}.
\end{eqnarray*}
Suppose that $\bar{\bm Y} = (1,0,0,0,1,0,0,0,1)^T$, to design such a TSP problem, $\bm d$ should satisfy
\begin{eqnarray}
(A_r + \mathrm{diag}\{\bm \mu\}) \bar{\bm Y} &=& \bm b_r + \frac{1}{2}\bm \mu - E_r^T \bm \lambda  \nonumber \\
A_r + \mathrm{diag}\{\bm \mu\} &\succ& 0, 
\end{eqnarray}
which corresponds to the former two conditions in (\ref{inverseprob}), and 
\begin{eqnarray}
d_{12} + d_{23} + d_{34} + d_{41} &<& d_{13} + d_{32} + d_{24} + d_{41} \nonumber \\
d_{12} + d_{23} + d_{34} + d_{41} &<& d_{13} + d_{34} + d_{42} + d_{21},  
\end{eqnarray}
which is to guarantee that $\bar{\bm Y}$ is a best solution, and 
\begin{eqnarray}
d_{ij} &>& 0 \nonumber \\
d_{ij} &=& d_{ji} \nonumber \\
d_{ij} &\leq& d_{ik} + d_{kj}, i \neq j \neq k, 
\end{eqnarray}
which is to guarantee that a Euclidean distance matrix \cite{EDM} is satisfied, 
or more specifically
\begin{eqnarray*}
\begin{pmatrix}
d_{23} + \mu_1\\
0 \\
d_{43}\\
d_{24}\\
d_{32} + d_{34} + \mu_{5}\\
d_{42}\\
d_{23}\\
0 \\
d_{43} + \mu_9\\
\end{pmatrix} =
\begin{pmatrix}
-d_{12}\\
-d_{13}\\
-d_{14}\\
0\\
0\\
0\\
-d_{12}\\
-d_{13}\\
-d_{14}\\
\end{pmatrix}
+
\frac{1}{2}
\begin{pmatrix}
\mu_{1}\\
\mu_{2}\\
\mu_{3}\\
\mu_{4}\\
\mu_{5}\\
\mu_{6}\\
\mu_{7}\\
\mu_{8}\\
\mu_{9}\\
\end{pmatrix}
-
\begin{pmatrix}
\lambda_{1} + \lambda_{4}\\
\lambda_{1} + \lambda_{5}\\
\lambda_{1}\\
\lambda_{2} + \lambda_{4}\\
\lambda_{2} + \lambda_{5}\\
\lambda_{2}\\
\lambda_{3} + \lambda_{4}\\
\lambda_{3} + \lambda_{5}\\
\lambda_{3} \\
\end{pmatrix}
\end{eqnarray*}
\begin{eqnarray*}
\begin{pmatrix}
\mu_1 & 0 & 0 & 0 & d_{23} & d_{24} & 0 & 0 & 0 \\
0 & \mu_2 & 0 & d_{32} & 0 & d_{34} & 0 & 0 & 0\\
0 & 0 & \mu_{3} & d_{42} & d_{43} & 0 & 0 & 0 & 0\\
0 & d_{23} & d_{24} & \mu_{4} & 0 & 0 & 0 & d_{23} & d_{24}\\
d_{32} & 0 & d_{34} & 0 & \mu_{5} & 0 & d_{32} & 0 & d_{34} \\
d_{42} & d_{43} & 0 & 0 & 0 & \mu_{6} & d_{42} & d_{43} & 0 \\
0 & 0 & 0 & 0 & d_{23} & d_{24} & \mu_{7} & 0 & 0    \\
0 & 0 & 0 & d_{32} & 0 & d_{34} & 0 & \mu_{8} & 0  \\
0 & 0 & 0 & d_{42} & d_{43} & 0 & 0 & 0 & \mu_{9}  \\
\end{pmatrix}
\succ 0
\end{eqnarray*}
\begin{eqnarray*}
d_{12} + d_{23} + d_{34} + d_{41} &<& d_{13} + d_{32} + d_{24} + d_{41} \nonumber \\
d_{12} + d_{23} + d_{34} + d_{41} &<& d_{13} + d_{34} + d_{42} + d_{21} \nonumber \\
d_{ij} &>& 0 \nonumber \\
d_{ij} &=& d_{ji} \nonumber \\
d_{ij} &\leq& d_{ik} + d_{kj}, i \neq j \neq k
\end{eqnarray*}

We try to solve the above feasible problem (involving all variables $\bm d, \bm \lambda,  \bm \mu$) numerically by YALMIP \cite{lofberg2004yalmip}, but no feasible solutions can be obtained, which may be an indication that
the traveling salesman problem can not be solved by classic Lagrangian.



\end{document}